\newcommand{\const}{\mathop{\rm const}\nolimits}
\newcommand{\todo}[1][\null]{\ensuremath{\clubsuit}}
\newcommand{\noprint}[1]{}
\newcommand{\checked}[1][\null]{\ensuremath{\boldsymbol{\surd}}}
\newtheorem{theorem}{Theorem}
\newtheorem{proposition}[theorem]{Proposition}
\newtheorem*{proposition*}{Proposition}
{\theoremstyle{definition}

\newtheorem*{notation*}{Notation}
}
\begin{document}
\par\noindent {\LARGE\bf
Light-like parallel vector fields and Einstein equations of gravity

\par}

\vspace{5mm}\par\noindent{\large
Isidore Mahara, C\'elestin Kurujyibwami and Venuste Nyagahakwa $^{\dag}$
}\par\vspace{2mm}\par

\vspace{2mm}\par\noindent{\it
$^\dag$\,University of Rwanda, College of Science and Technology, P.O.\,Box: 3900, Kigali, Rwanda}\\
$\phantom{^\dag}$\,E-mails: maraha03@yahoo.fr, celeku@yahoo.fr and venustino2005@yahoo.fr

\vspace{6mm}\par\noindent\hspace*{10mm}\parbox{140mm}{\small
We prove that, contrary to the situation with time-like and space-like parallel vector fields, there are real gravitational fields satisfying Einsteins equations of gravity and admitting nontrivial light-like parallel vector fields; we solve completely the field equations under this geometric constraint and obtain through Bianchi identities an interesting property for the Riemann curvature tensor corresponding with a real gravitational field admitting a nontrivial parallel vector field. The solutions we obtained turn out to be generically wave-like and then we prove by a geometric method that this class of solutions is quite different from the class of solutions corresponding with spherical waves.

\vspace{2mm}\par\noindent
\textbf{Keywords:} \emph{Parallel vector fields, light-like vector fields, Einstein field equations, real gravitational fields, flat space-time}
}

\looseness=-1

\vspace{4mm}

\section{Introduction}\label{intro}
One of the authors of the present paper had already considered the problem of parallel transport in a gravitational field in~\cite{Ma}. Actually, as well stressed in~\cite{We} it is always possible to define on a (pseudo-)Riemannian manifold a parallel displacement which, at the infinitesimal level, has the same properties as the parallel displacement in linear spaces; but the integrability of the process of parallel transport for any vector is an intrinsic property of linear spaces. It is then an interesting mathematical problem to try to find for which kind of vectors the parallel transport on a (pseudo-)Riemannian manifold is an integrable process.

The results in this paper extend the work done in~\cite{Ma}, where is proved the non-existence of non-trivial time-like or space-like parallel vector fields in a real gravitational field satisfying Einstein's equations of gravity. At variance there are non-trivial gravitational fields admitting light-like parallel vector fields and the solutions to the corresponding field equations are generically wave like. Of course, wave-like solutions to Einstein's equations can be found through Physics literature and we will give some examples of them. However, when possible, it is always interesting to relate a class of solutions to a geometric property or a symmetry group of the underlying differentiable manifold.  It is in this spirit that we use geometric arguments to prove that the class of solutions we have obtained through the existence of a nontrivial parallel vector field and the class of spherical waves are quite different. It is also in this spirit that we considered in~\cite{Che} the covariant formulation of the strong equivalence principle and deduced through Bianchi identities an interesting property of the Riemann curvature tensor. Generally, geometric considerations are useful and, even from the above mentioned negative result about time-like and space like vector fields, we can deduce the following well-known result: \emph{A real gravitational field can never be stationary in a synchronous reference frame}. In fact, if the metric tensor 
$ds^2\equiv (dx^{0})^2+g_{\alpha \beta} dx^{\alpha} dx^{\beta}$ does not depend on $x^{0}$ then the time-like vector fields $\frac{\partial}{\partial x^0}$ is parallel and the corresponding gravitational field is necessarily trivial.

The structure of this paper is the following: 
In Section~2 we describe the properties of parallel vector fields on (pseudo-)Riemannian manifolds. In Section~3 we describe Einstein's equations and prove a statement about the energy-momentum tensor of a gravitational field admitting a nontrivial parallel vector field.
The Section~4 deals with Einstein's field equations in a coordinate system adapted to a parallel vector field. Actually, in this case, the field equations amount to the annihilation of the Ricci tensor. In Section~5 we use geometric considerations about the Riemann curvature tensor and solve completely the field equations.
 In Section~6 we prove that the class of solutions we have obtained and the class of solutions corresponding to spherical waves are essentially different. 
Section~7 provides examples of real gravitational fields admitting nontrivial parallel vector fields. We end this paper by giving concluding remarks. 

This paper uses standard results of differential geometry as presented in~\cite{St} and deals only with classical general relativity as presented in~\cite{La}.

\section{Parallel vector fields on (pseudo-)Riemannian manifolds}
Let $M$ be an $n$-dimensional manifold with metric tensor~$g$ given by its components~$g_{ij}$ in a local coordinate system~$(x^i), 1\leq i\leq n$.

A vector field $X$ defined by its contravariant components $(X^i)$ or its covariant components $(X_i)$  is called parallel (or covariantly constant) if it is parallel with respect to any piecewise smooth curve on $M$.
Then $X$ will be parallel if it satisfies the following equations
\begin{equation}\label{Equation21}
X^{i}_{;j}\equiv \frac{\partial X^i}{\partial x^j}+\Gamma^i_{jk}X^k=0,
 \end{equation}
where $\Gamma^i_{jk}$ are the Christoffel symbols associated with $g$ in the coordinate system $ (x^i)$. The covariant form of \eqref{Equation21} reads to
\begin{equation*}\label{Equation22}
X_{i;j}\equiv \frac{\partial X_i}{\partial x^j}-\Gamma^k_{ij}X_k=0.
\end{equation*}
From these equations it follows that
\begin{equation*}\label{Equation23}
\frac{\partial X_i}{\partial x^j}-\frac{\partial X_j}{\partial x^i}=0,
\end{equation*}
and then, by Poincar\'e theorem we conclude that: \textit{Any parallel vector field is a gradient field}.

From \eqref{Equation21}, it is easy to deduce that a necessary condition for $X$ to be parallel is that
\begin{equation}\label{Equation24}
R_{ijkl}X^l=0.
\end{equation}
As a consequence of Eq. \eqref{Equation24}, we have also the relations
\begin{equation}\label{Equation25}
R_{jl}X^l=0.
\end{equation}

In (\ref{Equation24}) and (\ref{Equation25}) $R_{ijkl}$ and $R_{kl}$ are  respectively the components of the Riemann curvature tensor and the Ricci tensor associated with $g$.

\section{Einstein equations of gravity, parallel vector fields and the energy-momentum tensor}
In general relativity, gravitational fields are governed by Einstein field equations
\begin{equation}\label{Equation31}
R_{ij}-\frac{1}{2}Rg_{ij}= \kappa T_{ij},
\end{equation}
where $g_{ij}$ is the metric tensor of space-time in a coordinate system $(x^{i})$; $R_{ij}$ and $R$ are respectively the corresponding Ricci tensor and Riemann scalar curvature; $\kappa$ is the gravitational constant whose value depends on the choice of the units system; $T_{ij}\equiv (p+\varepsilon)u_iu_j-pg_{ij}$ is the energy momentum tensor of the macroscopic body creating the gravitational field with $p,$  $\varepsilon$ and $u_i$  the pressure, the proper energy and  the $4$-velocity, respectively.  Actually, such a $T_{ij}$ is the energy-momentum tensor of a perfect fluid, but, in many cases, the error resulting  from the identification of the macroscopic body with a perfect fluid is very small.The signature of the metric tensor $g_{ij}$ is $(1,3)$ or, equivalently, $(+, -, -, -)$. This signature corresponds with the relation $g_{ij} u^iu^j=1$ for the 4-velocity. 

The following proposition is the main statement of this section.
\begin{proposition}\label{Prop1}
If the metric $g_{ij}$ satisfies Einstein equations \eqref{Equation31} and admits a non-trivial parallel vector field $X$, then energy- momentum tensor $T_{ij}$ is identically zero. 
\end{proposition}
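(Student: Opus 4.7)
The approach is to contract the Einstein equations \eqref{Equation31} with $X^j$ and exploit the consequence $R_{jl}X^l=0$ from \eqref{Equation25} together with the explicit perfect-fluid form of $T_{ij}$. Taking the trace of \eqref{Equation31} yields $R=\kappa(3p-\varepsilon)$, so the trace-reversed Einstein equation reads $R_{ij}=\kappa\bigl(T_{ij}-\tfrac12 Tg_{ij}\bigr)$ with $T=\varepsilon-3p$; contracting this with $X^j$ gives $T_{ij}X^j=\tfrac12 TX_i$, and substituting $T_{ij}X^j=(p+\varepsilon)\alpha u_i-pX_i$ with $\alpha:=u_jX^j$ produces the key identity
\begin{equation*}
(p+\varepsilon)\,\alpha\, u_i \;=\; \tfrac{\varepsilon-p}{2}\, X_i. \tag{$\star$}
\end{equation*}

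The light-like case, which is the one of interest for the rest of the paper, follows quickly from $(\star)$. Taking the squared norm and using $u^iu_i=1$ and $X^iX_i=0$ gives $(p+\varepsilon)^2\alpha^2=0$; because $u$ is time-like and $X$ is a nonzero null vector their inner product $\alpha$ cannot vanish (the orthogonal complement of a null vector in Lorentzian signature contains no time-like direction), so $p+\varepsilon=0$. Reinserting this into $(\star)$ forces $(\varepsilon-p)X_i=0$, and since $X$ is nowhere zero (being parallel and non-trivial on a connected manifold) we conclude $\varepsilon=p$, so $p=\varepsilon=0$ and $T_{ij}\equiv 0$.

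For non-light-like $X$ one combines $(\star)$ with the earlier result of \cite{Ma}. If $\alpha\ne 0$ and $\varepsilon\ne p$, relation $(\star)$ makes $X_i$ proportional to $u_i$ with factor $c=2(p+\varepsilon)\alpha/(\varepsilon-p)$; the constancy of $|X|^2=c^2$, a consequence of $X$ being parallel, forces $c$ to be constant, which upgrades pointwise proportionality to the genuine parallelism of $u$ itself, yielding a non-trivial time-like parallel vector field. If instead $\alpha=0$, then $(\star)$ gives $\varepsilon=p$ together with $u\cdot X=0$, and the same Lorentzian orthogonality argument rules out null or time-like $X$, leaving $X$ space-like and parallel. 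In either branch the non-existence theorem of \cite{Ma} forces the metric to be flat, whence $T_{ij}=0$ via \eqref{Equation31}.

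The main obstacle, in my view, is the bookkeeping of the case analysis: one must use the constancy of $|X|^2$ to pass from the pointwise algebraic identity $X=cu$ to true parallelism of $u$, and one must invoke the Lorentzian signature carefully to rule out orthogonality between the time-like $u$ and a non-space-like $X$. Once these observations are in place, every branch of $(\star)$ funnels into either the direct conclusion $p=\varepsilon=0$ (the light-like case, handled without appeal to \cite{Ma}) or an application of the earlier non-existence result.
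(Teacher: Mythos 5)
Your proposal is correct, but it closes the argument by a genuinely different route than the paper. Your identity $(\star)$ is exactly the paper's trace-reversed equation \eqref{Equation33} contracted with $X^j$, and the paper extracts the same two scalar consequences: contracting further with $u^i$ gives its Eq.~\eqref{Equation35}, $(\varepsilon+3p)\,u_iX^i=0$, and in the case $u_iX^i=0$ it reads off $(\varepsilon-p)X_i=0$ directly from \eqref{Equation25} and \eqref{Equation33}. So the algebra coincides up to that point; the difference lies in how the two branches are killed. The paper stays self-contained and uniform in the causal type of $X$, but pays with physical input: it needs ``$\varepsilon+3p=0$ implies $\varepsilon=p=0$'' and ``in general $\varepsilon-3p\geq0$'', i.e.\ unstated energy conditions on the perfect fluid. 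You instead split on the causal type of $X$ (legitimate, since parallelism makes $|X|^2$ constant, so the type is global), and in the light-like case your argument is purely algebraic and in fact sharper than the paper's: squaring $(\star)$ and using that a time-like $u$ cannot be orthogonal to a nonzero null $X$ yields $p+\varepsilon=0$ and then $\varepsilon=p$ with no energy condition at all. In the non-null cases you import the non-existence theorem of \cite{Ma}, which the paper's proof of Proposition~\ref{Prop1} deliberately does not use; logically this is fine (your upgrade of the pointwise relation $X_i=c\,u_i$ to parallelism of $u$ via constancy of $|X|^2=c^2$ and non-vanishing of a non-trivial parallel field is sound, as is the orthogonality argument forcing $X$ space-like when $\alpha=0$), but it makes the proposition conditional on that external result, whereas the paper's version stands alone. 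One small hole to patch: your non-light-like analysis treats only the sub-cases ($\alpha\neq0$, $\varepsilon\neq p$) and ($\alpha=0$), omitting ($\alpha\neq0$, $\varepsilon=p$); there $(\star)$ gives $(p+\varepsilon)\,\alpha\,u_i=0$, hence $p+\varepsilon=0$, which together with $\varepsilon=p$ yields $\varepsilon=p=0$ immediately, so the gap is trivially filled and does not require \cite{Ma}.
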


\begin{proof} A double contraction on both sides of~\eqref{Equation31} yields
\begin{equation}\label{Equation32}
R= - \kappa (\varepsilon -3p).
\end{equation}
Replacing in Eq. \eqref{Equation31} $R$ by  its value from Eq. \eqref{Equation32}, we get
\begin{equation}\label{Equation33}
R_{ij}+\frac{1}{2} \kappa (\varepsilon -p) g_{ij}= \kappa (\varepsilon +p)u_iu_j.
\end{equation}
From Eq. \eqref{Equation33} it follows that
\begin{equation}\label{Equation34}
R_{ij}u^iX^j+\frac{1}{2} \kappa (\varepsilon -p) g_{ij}u^iX^j= \kappa (\varepsilon +p)u_iu_ju^iX^j.
\end{equation}
Then Eq.~\eqref{Equation25} and the identity $u_iu^i=1$ imply
\begin{equation}\label{Equation35}
-\frac{1}{2} \kappa ( \varepsilon +3p)u_iX^i=0.
\end{equation}
Eq.~\eqref{Equation35} admits two solutions: $u_iX^i=0$ and $\varepsilon +3p=0$.
The solution $\varepsilon +3p=0$ implies necessarily $ \varepsilon =p=0 $ and the energy-momentum tensor is zero.

If $u_iX^i=0$, Eqs. \eqref{Equation25} and \eqref{Equation33} yield $(\varepsilon - p)g_{ij}X^j=0$. 
Since $g_{ij}$ is regular and $X$ is supposed to be different from zero, we have necessarily
$\varepsilon -p=0$. In general $\varepsilon -3p \geq0$ and then for this case also we have $\varepsilon=p=0$, 
which implies that the energy-momentum tensor is zero.
\end{proof}

 \section{Einstein equations in a coordinate system adapted to a parallel vector field}
Since there exists an open set of space-time where the nontrivial parallel vector field $X$ never vanishes, we can assume that $X$ is $\dfrac{\partial }{\partial x^0}$ in the local coordinate system $(x^0, x^1, x^2, x^3)$~\cite{St}. This means that
\begin{equation}\label{Equation41}
X=\frac{\partial }{\partial x^0},\quad g(X,X)=g_{00}=0.
\end{equation}
From Eq.~\eqref{Equation41} it follows that $X^0=1$, $X^\alpha=0$, $1\leq \alpha \leq 3$, and Eqs.~\eqref{Equation21} for the parallel vector fields amount to
\begin{equation*}\label{Equation42}
\Gamma^i_{jk}X^k \equiv \Gamma^i_{j0}=0.
\end{equation*}
This is the same as writing that  
\begin{equation}\label{Equation43}
g^{ik}\left(\frac{\partial g_{jk}}{\partial x^0}+\frac{\partial g_{0k}}{\partial x^j}-\frac{\partial g_{j0}}{\partial x^k}\right)=0.
\end{equation}
The system \eqref{Equation43} is equivalent to
\begin{equation}\label{Equation44}
\frac{\partial g_{jk}}{\partial x^0}+\frac{\partial g_{0k}}{\partial x^j}-\frac{\partial g_{j0}}{\partial x^k}=0.
\end{equation}
Since $X$ is a gradient field and $X_i=g_{ij}X^j=g_{i0}$ there exists a differentiable function $f$ such that
$g_{i0}=\frac{\partial f}{\partial x^{i}}$ with $\frac{\partial f}{\partial x^0}=g_{00}=0$.

Therefore from~\eqref{Equation44} it follows that
\begin{equation}\label{Equation45}
\frac{\partial g_{jk}}{\partial x^0}=0,
\end{equation}
which means that the metric tensor $g_{ij}$ depends only on the variables $x^1, x^2, x^3$.
From $\dfrac{\partial f}{\partial x^0}=0$ it follows that$\left(\dfrac{\partial f}{\partial x^1}\right)^2+\left(\dfrac{\partial f}{\partial x^2}\right)^2+\left(\dfrac{\partial f}{\partial x^3}\right)^2\neq0$ and we can choose $(x^1, x^2, x^3)$ such that the metric tensor takes the form
\begin{equation}\label{Equation46}
ds^2=2dx^0dx^1+g_{\alpha\beta}dx^{\alpha}dx^{\beta},
\end{equation}
where the indices $\alpha$, $\beta$ run through $1,2,3$ and the $g_{\alpha \beta}$'s do not depend on $x^0$.  
It is worth noting the following formulae
\begin{gather}\label{Equation47}
\det(g_{ij})=g_{23}^2-g_{22}g_{33}<0,\, g^{01}=1,\, g^{1\alpha}=0,\,\alpha = 1,2,3  \\ \label{Equation49}
g^{22}=-\frac{g_{33}}{\det(g_{ij})},\, g^{33}=-\frac{g_{22}}{\det(g_{ij})},\, g^{23}=\frac{g_{23}}{\det(g_{ij})}.
\end{gather}
In empty space Einstein equations read 
\begin{equation}\label{Equation411}
R_{ij}\equiv g^{kl}R_{kilj}=0.
\end{equation}
From \eqref{Equation24} it follows that $R_{ijk0}=0$ and using~\eqref{Equation47} and~\eqref{Equation49} in the relations \eqref{Equation411} leads to
\begin{gather}\label{Equation412}
R_{1223}=R_{1323}=0,\\ \label{Equation413}
g_{33}R_{1212}-2g_{23}R_{1213}+g_{22}R_{1313}=0,\\ \label{Equation414}
R_{2323}=0.
\end{gather}
The existence of non trivial solutions is due to the fact that~\eqref{Equation413} does not imply the relations~$R_{1212}=R_{1312}=R_{1313}=0$.

\section{Field equations' solutions}\label{sect5}
Hereafter, we use a standard coordinate system in order to solve the field equations.
From the relations~\eqref{Equation47}--\eqref{Equation49} and~\eqref{Equation414} it follows that the matrix 
\begin{equation}\label{Equation51}
(\gamma_{\alpha\beta})=-\left(\begin{matrix}
g_{22}&g_{23}\\
g_{23}&g_{33}
\end{matrix}\right)
\end{equation}
can be considered as the metric tensor of a flat $2$-dimensional Riemannian manifold for each fixed $x^1$. 
Therefore, there exists a coordinate system where the $\gamma_{\alpha \beta}$  depend only on $x^1$. 
It is even possible to choose the coordinate system such that, in the metric \eqref{Equation46}, $g_{22}=g_{33}=-1$, $g_{23}=0$.
In such a coordinate system the components of the Riemann curvature tensor in Eqs. \eqref{Equation412} to \eqref{Equation414} are as follows:
\begin{equation}\label{Equation52}
R_{2323}= 0,
\end{equation}
\begin{equation}\label{Equation53}
R_{1223}=\frac{1}{2}\left(\frac{\partial ^2 g_{13}}{\partial x^2\partial x^2}-\frac{\partial ^2 g_{12}}{\partial x^2\partial x^3}\right),
\end{equation}
\begin{equation}\label{Equation54}
R_{1323}=\frac{1}{2}\left(\frac{\partial ^2 g_{13}}{\partial x^2\partial x^3}-\frac{\partial ^2 g_{12}}{\partial x^3\partial x^3}\right),
\end{equation}
\begin{equation}\label{Equation55}
R_{1212}=\frac{1}{2}\left(2\frac{\partial ^2 g_{12}}{\partial x^1\partial x^2}-\frac{\partial ^2 g_{11}}{\partial x^2\partial x^2}\right)-\frac{1}{4}\left(\frac{\partial g_{13}}{\partial x^2}-\frac{\partial g_{12}}{\partial x^3}\right)^2,
\end{equation}
\begin{equation}\label{Equation56}
R_{1313}=\frac{1}{2}\left(2\frac{\partial ^2 g_{13}}{\partial x^1\partial x^3}-\frac{\partial ^2 g_{11}}{\partial x^3\partial x^3}\right)-\frac{1}{4}\left(\frac{\partial g_{13}}{\partial x^2}-\frac{\partial g_{12}}{\partial x^3}\right)^2,
\end{equation}
\begin{equation}\label{Equation57}
R_{1213}=\frac{1}{2}\left(\frac{\partial ^2 g_{13}}{\partial x^1\partial x^2}+\frac{\partial ^2 g_{12}}{\partial x^1\partial x^3}-\frac{\partial ^2 g_{11}}{\partial x^2\partial x^3}\right),
\end{equation}
and Einstein field Eqs. \eqref{Equation412}--\eqref{Equation414} read
\begin{equation}\label{Equation58}
\frac{\partial ^2 g_{13}}{\partial x^2\partial x^2}-\frac{\partial ^2 g_{12}}{\partial x^2\partial x^3}=0,
\end{equation}
\begin{equation}\label{Equation59}
\frac{\partial ^2 g_{13}}{\partial x^2\partial x^3}-\frac{\partial ^2 g_{12}}{\partial x^3\partial x^3}=0,
\end{equation}
\begin{equation}\label{Equation510}
\frac{\partial ^2 g_{11}}{\partial x^2\partial x^2}+\frac{\partial ^2 g_{11}}{\partial  x^3\partial x^3}-2\left(\frac{\partial ^2 g_{12}}{\partial x^1\partial x^2}+\frac{\partial ^2 g_{13}}{\partial x^1\partial x^3}\right)+\left(\frac{\partial g_{13}}{\partial x^2}-\frac{\partial g_{12}}{\partial x^3}\right)^2=0.
\end{equation}
From \eqref{Equation58} and \eqref{Equation59} it follows that $\dfrac{\partial g_{13}}{\partial x^2}-\dfrac{\partial g_{12}}{\partial x^3}$ depends only on $x^1$. Therefore there exists a function $ \varphi $ such that $\dfrac{\partial g_{13}}{\partial x^2}-\dfrac{\partial g_{12}}{\partial x^3}=\varphi(x^1)$.

If $\varphi(x^1)\equiv0 $ there exists a function $f$ such that $g_{12}=\dfrac{\partial f}{\partial x^2}$,\, 
$g_{13}=\dfrac{\partial f}{\partial x^3} $. Then \eqref{Equation510} reads
\begin{equation}\label{Equation511} \frac{\partial ^2}{\partial x^2\partial x^2}\left(g_{11}-2\dfrac{\partial f}{\partial x^1}\right)+\frac{\partial ^2}{\partial x^3\partial x^3}\left(g_{11}-2\dfrac{\partial f}{\partial x^1}\right)=0.
\end{equation}
The general solution of the system \eqref{Equation58}--\eqref{Equation510} is given by
\begin{equation}\label{Equation512}
 g_{12}=\dfrac{\partial f}{\partial x^2},\quad  g_{13}=\dfrac{\partial f}{\partial x^3},
\quad g_{11}=2\dfrac{\partial f}{\partial x^1}+h(x^1,x^2,x^3)
\end{equation}
with $f$ an arbitrary differentiable function and $h$ any function which is harmonic with respect to $x^2$ and $x^3$.

If $ \varphi(x^1)\neq0 $ we can write $ g_{12}=\dfrac{\partial f}{\partial x^2} $. From $\dfrac{\partial g_{13}}{\partial x^2}-\dfrac{\partial g_{12}}{\partial x^3}=\varphi(x^1) $ it follows that $ g_{13}$ takes the form $g_{13}=\dfrac{\partial f}{\partial x^3}  +  x^2\varphi(x^1) +  \dfrac{\partial \psi(x^1, x^3)}{\partial x^3} $. However, since we do not change the value of $ g_{12} $ if we replace the arbitrary function $ f $ by $ f+\psi $ we can write $ g_{13}= \dfrac{\partial f}{\partial x^3} +x^2\varphi(x^1)$ and \eqref{Equation510} becomes
\begin{equation}\label{Equation513} \frac{\partial ^2}{\partial x^2\partial x^2}\left(g_{11}-2\dfrac{\partial f}{\partial x^1}\right)+\frac{\partial ^2}{\partial x^3\partial x^3}\left(g_{11}-2\dfrac{\partial f}{\partial x^1}\right)+\varphi^2(x^1)=0.
\end{equation}

The general solution of the system \eqref{Equation58}--\eqref{Equation510} is given by
 \begin{equation}\label{Equation5114}
 g_{12}=\dfrac{\partial f}{\partial x^2},\quad  g_{13}=\dfrac{\partial f}{\partial x^3}+ x^2\varphi(x^1),\quad g_{11}=2\dfrac{\partial f}{\partial x^1}+h(x^1,x^2,x^3)-\frac{1}{2}(x^2)^2\varphi^2(x^1)
\end{equation} 
with $f$ an arbitrary function and $h$ harmonic with respect to $x^2$ and $x^3$.

Formula \eqref{Equation5114}, together with  $g_{00}=g_{02}=g_{03}=g_{23}=0$, $g_{01}=1$, $g_{22}=g_{33}=-1$ determine completely the general solution of the field equations for a gravitational field admitting a nontrivial parallel vector field. 

It is worth noting that the coordinate system $(x^0, x^1, x^2, x^3)$ used to solve the field equations is related to the "natural" coordinate system $(x^{'0}, x^{'1}, x^{'2}, x^{'3})$ by the transformations 
\[
x^0=\frac{x^{'0}+x^{'1}}{\sqrt{2}},\, x^0=\frac{x^{'0}-x^{'1}}{\sqrt{2}},\, x^2=x^{'2},\,  x^3=x^{'3}.
\]
It is then clear that, generically, the solutions we have obtained are wave-like. However, the system \eqref{Equation58}--\eqref{Equation510} admits notrivial solutions which do not depend on $x^{1}$. Those solutions are not wave-like and in that case $\frac{\partial}{\partial x^{0}}$ and $\frac{\partial}{\partial x^{1}}$ are Killing vectors \cite{To}.

However, if in addition, $ \dfrac{\partial }{\partial x^1}$ is parallel, it follows from \eqref{Equation52}--\eqref{Equation57} that \eqref{Equation46} is the metric of a flat space-time. Furthermore, from the fact that $ \dfrac{\partial g_{13}}{\partial x^2}-\dfrac{\partial g_{12}}{\partial x^3} $ depends only on $ x^1 $ and $R_{1212}+ R_{1313}=0 $ we deduce from \eqref{Equation56}--\eqref{Equation58} the relations
\begin{equation}\label{Equation515}
\dfrac{\partial R_{1212}}{\partial x^2} + \dfrac{\partial  R_{1312}}{\partial x^3}=0, \quad\dfrac{\partial R_{1213}}{\partial x^2} + \dfrac{\partial  R_{1313}}{\partial x^3}=0,
\end{equation}
\begin{equation}\label{Equation516}
\dfrac{\partial R_{1213}}{\partial x^2}=\dfrac{\partial  R_{1212}}{\partial x^3}, \quad\dfrac{\partial R_{1213}}{\partial x^3}=-\dfrac{\partial R_{1212}}{\partial x^2}.
\end{equation}
The relations \eqref{Equation516} are the Cauchy-Riemann equations for the functions  $ R_{1212} $ and $ R_{1213} $ and  therefore we have the following proposition.
\begin{proposition}\label{Prop31}
If a gravitational field admits a nontrivial parallel vector field then there exists a reference frame where the only nonzero components (up to indices permutations) $R_{1212}$, $R_{1313}$, $R_{1213}$ satisfy the relations $R_{1212}+R_{1313}=0$,  $R_{1213} + iR_{1212}=f(x^1,z) $ with $f$ a holomorphic function with respect to $z$ and $z=x^2 + ix^3$.
\end{proposition}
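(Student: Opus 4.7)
My plan is to collect the ingredients already derived in Section~5 and observe that they deliver precisely the Cauchy-Riemann condition for the complex combination $R_{1213}+iR_{1212}$. The working reference frame is the adapted chart constructed at the start of Section~5, in which $g_{00}=0$, $g_{01}=1$, $g_{22}=g_{33}=-1$, $g_{02}=g_{03}=g_{23}=0$ and $X=\partial/\partial x^0$. I would first enumerate the possibly nonzero Riemann components in this frame: because $X^0=1$ and $X^\alpha=0$, \eqref{Equation24} forces $R_{ijk0}=0$, and the symmetries of $R_{ijkl}$ then annihilate every component carrying an index~$0$. Combined with $R_{2323}=R_{1223}=R_{1323}=0$ from \eqref{Equation412}--\eqref{Equation414}, the only components (up to index permutations) that may survive are $R_{1212}$, $R_{1313}$ and $R_{1213}$.

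The first algebraic relation $R_{1212}+R_{1313}=0$ is immediate from Einstein's equation \eqref{Equation413}: inserting $g_{22}=g_{33}=-1$, $g_{23}=0$ turns it into $-R_{1212}-R_{1313}=0$.

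For the holomorphy part I would differentiate the explicit formulas \eqref{Equation55}--\eqref{Equation57} in $x^2$ and $x^3$. The essential observation is that, by \eqref{Equation58}--\eqref{Equation59}, the scalar $\partial_2 g_{13}-\partial_3 g_{12}=\varphi(x^1)$ depends only on $x^1$, so the quadratic terms in \eqref{Equation55} and \eqref{Equation56} are annihilated by $\partial_2$ and $\partial_3$. Computing $\partial_3 R_{1212}-\partial_2 R_{1213}$ then collapses to $\tfrac12\,\partial_1(\partial_2^2 g_{13}-\partial_2\partial_3 g_{12})$, which vanishes by \eqref{Equation58}; similarly $\partial_2 R_{1313}-\partial_3 R_{1213}$ collapses to $\tfrac12\,\partial_1(\partial_2\partial_3 g_{13}-\partial_3^2 g_{12})$, which vanishes by \eqref{Equation59}. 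Replacing $R_{1313}$ by $-R_{1212}$ via the previous step produces the pair \eqref{Equation516}, namely $\partial_2 R_{1213}=\partial_3 R_{1212}$ and $\partial_3 R_{1213}=-\partial_2 R_{1212}$. These are the Cauchy-Riemann equations for the real and imaginary parts of $F(x^1,z):=R_{1213}+iR_{1212}$ in the variable $z=x^2+ix^3$, with $x^1$ treated as a parameter, so $\partial_{\bar z}F=0$ and $F=f(x^1,z)$ is holomorphic in $z$.

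The only step requiring careful attention is this last computation: one must track that every surviving linear term on one side of each would-be CR equation is paired by a matching term on the other after a single $\partial_1$-differentiation of \eqref{Equation58}--\eqref{Equation59}, and that the apparently threatening quadratic terms in \eqref{Equation55}--\eqref{Equation56} genuinely disappear because their argument is independent of $(x^2,x^3)$. Beyond this bookkeeping no further input is needed, as the trace relation $R_{1212}+R_{1313}=0$ plays the complementary role of turning the identity $\partial_2 R_{1313}=\partial_3 R_{1213}$ into the second Cauchy-Riemann equation written in terms of $R_{1212}$.
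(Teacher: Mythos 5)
Your proposal is correct and follows essentially the same route as the paper: it derives $R_{1212}+R_{1313}=0$ from \eqref{Equation413} with $g_{22}=g_{33}=-1$, $g_{23}=0$, uses the fact that $\partial g_{13}/\partial x^2-\partial g_{12}/\partial x^3=\varphi(x^1)$ to kill the quadratic terms in \eqref{Equation55}--\eqref{Equation56} under $\partial_2,\partial_3$, and obtains the Cauchy--Riemann system \eqref{Equation516} from \eqref{Equation58}--\eqref{Equation59}, exactly as in Section~\ref{sect5}. The only blemish is an immaterial sign in your intermediate expression for $\partial_3 R_{1212}-\partial_2 R_{1213}$ (it equals $-\tfrac12\,\partial_1\bigl(\partial_2^2 g_{13}-\partial_2\partial_3 g_{12}\bigr)$), which vanishes by \eqref{Equation58} either way.
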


It is worth giving an interpretation to the relations~\eqref{Equation515}. In fact in~\cite{Ge} the author shows that a covariant necessary condition for a gravity theory to ensure the implementation of the strong equivalence principle is given by the equations 
\begin{align}\label{equation517}
D_i R^i_{jkl}=0
\end{align}
in regions where matter is absent. On the other hand, a contraction on $i$ and $m$ in Bianchi identities $D_m R^i_{jkl}+D_k R^i_{jlm}+D_l R^i_{jmk}=0$ amounts to~\eqref{equation517} identically in regions of space-time where the Ricci tensor is identically zero; such is the case in gravitational fields admitting a nontrivial parallel vector field within the context of the classical general relativity and Eqs.~\eqref{equation517} reduce to~\eqref{Equation515} under the constraint $ R_{1212}+R_{1313}=0$.

\section{Spherical gravitational waves}
In this Section we consider the class of Robinson and Trautman metrics which are solutions of Einstein equations of gravity and we prove that only solutions corresponding to trivial gravitational fields admit non-trivial parallel vector fields. In empty space such metrics take the form
\begin{equation}\label{Equation71}
ds^2=2d \rho d\sigma+\left(K-2H\rho-2\frac{m}{\rho}\right) d\sigma^2-\frac{\rho^2}{p^2}\left( d\xi^2+d\eta^2\right)
\end{equation}
where $m$ is a function of $\sigma$ only and $p$ is a function of $\rho, \xi, \eta$.
\begin{equation}\label{Equation72}
H=\frac{1}{p}\frac{\partial p}{\partial \sigma}
\end{equation}
and $K$ is the Gaussian curvature of the surface $\rho=1, \sigma=\const$,
\begin{equation}\label{Equation73}
K=p^2\left(\frac{\partial^2}{\partial \xi^2}+\frac{\partial^2}{\partial \eta^2}\right)\ln p
\end{equation}
and 
\begin{equation}\label{Equation74}
\frac{\partial^2 K}{\partial \xi^2}+\frac{\partial^2 K}{\partial \eta^2}=\frac{4}{p^2}\left( \frac{\partial}{\partial \sigma}-3H \right)m.
\end{equation}
With the notations $x^0=\rho$, $x^1=\sigma$, $x^2=\xi, x^3=\eta$ the non-zero components of the metric tensor are given by
\begin{equation}\label{Equation75}
g_{01}=1,\, g_{11}=K-2H\rho-2\frac{m}{\rho},\, g_{22}=g_{33}=-\frac{\rho^2}{p^2},\, g^{00}=-K+2H\rho+2\frac{m}{\rho},\, g^{01}=1
\end{equation}
\begin{equation}\label{Equation76}
g^{22}=g^{33}=-\frac{p^2}{\rho^2}.
\end{equation}

Up to indices permutation the only components of the curvature tensor which are not identically zero are:
\begin{equation*}
\begin{split}
& R_{0101}=2\frac{m}{\rho^3},\, R_{0112}=-\frac{1}{2\rho}\frac{\partial K}{\partial \xi},\,R_{0113}=-\frac{1}{2\rho}\frac{\partial K}{\partial \eta},\, R_{1202}=R_{1303}=-\frac{m}{\rho p^2}, \\
 &R_{1223}=\frac{\rho}{2p^2}\frac{\partial K}{\partial \eta},\, R_{1323}=\frac{1}{2p^2}\frac{\partial K}{\partial \xi}, R_{2323}=-m\frac{\rho}{p^4} \text{ and } R_{1212}, R_{1213}, R_{1313}
 \end{split}
 \end{equation*}
 whose expressions are too tedious to be represented here.

\begin{proposition}
If $(X^i)$ is a parallel vector field with respect to a metric of class (\ref{Equation71}) and if $X^1=0$ then $(X^i)$ is the trivial vector field.
\end{proposition}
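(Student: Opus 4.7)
The plan is to exploit the algebraic necessary condition $R_{ijkl}X^l=0$ of Equation~\eqref{Equation24}, feeding into it the explicit enumeration of the nonzero Riemann components given just before the proposition. Setting $X^1=0$, this condition reduces, for each triple $(i,j,k)$, to the single linear relation
\begin{equation*}
R_{ijk0}\,X^0 + R_{ijk2}\,X^2 + R_{ijk3}\,X^3 = 0,
\end{equation*}
and the idea is to pick $(i,j,k)$ so that at most one of the three coefficients survives, isolating $X^0$, $X^2$ or $X^3$ one at a time.

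Concretely, I would first take $(i,j,k)=(2,3,2)$: the component $R_{2322}$ vanishes by antisymmetry in the last pair, and $R_{2320}$ is zero because its index multi-set $\{0,2,2,3\}$ does not appear in the listed non-vanishing components (even up to the symmetries of the curvature tensor), so only $R_{2323}\,X^3=-(m\rho/p^4)\,X^3$ survives, giving $X^3=0$ provided $m\neq 0$. The mirror choice $(i,j,k)=(2,3,3)$ kills $X^2$ in the same way. Finally, with $X^2=X^3=0$ already in hand, the triple $(i,j,k)=(1,0,1)$ collapses to $R_{0101}\,X^0=(2m/\rho^3)\,X^0=0$, forcing $X^0=0$; together with the hypothesis $X^1=0$ this yields the triviality of $X$ in the generic regime $m\not\equiv 0$.

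The main obstacle is the degenerate regime $m\equiv 0$, where $R_{0101}$, $R_{1202}$ and $R_{2323}$ vanish simultaneously and the clean choices above become vacuous. One then has to reorganise the argument around the surviving nonzero components, namely $R_{0112}$ and $R_{0113}$ (proportional to $\partial K/\partial\xi$ and $\partial K/\partial\eta$), the pair $R_{1223}$, $R_{1323}$, and the ``hard'' components $R_{1212}$, $R_{1213}$, $R_{1313}$, and to show that the homogeneous linear system produced by running $(i,j,k)$ over enough triples, for instance $(0,1,1)$, $(2,3,1)$, $(1,2,1)$ and $(1,3,1)$, admits only the zero solution for $(X^0,X^2,X^3)$ as long as the curvature does not identically vanish. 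The residual subcase $m\equiv 0$, $K=\mathrm{const}$, $R_{1212}=R_{1213}=R_{1313}=0$ corresponds to a flat metric and hence to no real gravitational field, so the statement of the proposition is vacuous there.
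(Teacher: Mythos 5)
There is a genuine gap: your entire argument rests on the integrability condition \eqref{Equation24}, which is only a \emph{necessary} consequence of parallelism, and for this metric class it is strictly too weak to control $X^0$. Inspect the list of non-vanishing curvature components: every component carrying the index $0$, namely $R_{0101}$, $R_{0112}$, $R_{0113}$ and $R_{1202}=R_{1303}$, is proportional either to $m$ or to a first derivative of $K$. Hence in the subcase $m\equiv 0$, $K$ constant one has $R_{ijk0}\equiv 0$ for all $(i,j,k)$, and the field $X=X^0\,\partial/\partial x^0$ satisfies $R_{ijkl}X^l=0$ identically, whatever the function $X^0$. This subcase is not confined to flat metrics: with $m\equiv 0$ and $K$ constant, Eq.~\eqref{Equation74} is satisfied for any $p$ with $\partial p/\partial\sigma\neq 0$, and such metrics generically have $R_{1212}$, $R_{1213}$, $R_{1313}$ nonzero --- this is precisely why the paper's proof needs $X^1\neq 0$ in its second half to kill those components. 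So your key claim, that running $(i,j,k)$ over the triples $(0,1,1)$, $(2,3,1)$, $(1,2,1)$, $(1,3,1)$ yields a homogeneous system with only the zero solution whenever the curvature does not vanish identically, is false: in this subcase all four equations have zero coefficient on $X^0$ (e.g.\ $R_{1210}=-R_{0112}\propto \partial K/\partial\xi=0$), the system involves $(X^2,X^3)$ alone, and $X^0$ is left completely free. Your fallback for the flat residue is also mistaken: a flat metric of class \eqref{Equation71} admits a four-parameter family of parallel vector fields (the constant fields of Minkowski space), so the proposition is not vacuous there; its content is exactly that none of these fields has $X^1=0$ unless it is trivial, and no curvature identity can detect that.

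The paper closes both holes by not using \eqref{Equation24} at all for this proposition. It first invokes the result of~\cite{Ma} that a parallel field must be light-like; with $X^1=0$ this gives $g_{ij}X^iX^j=-\frac{\rho^2}{p^2}\left[(X^2)^2+(X^3)^2\right]=0$, hence $X^2=X^3=0$. It then returns to the first-order parallelism equation \eqref{Equation21} itself with $i=j=2$, which with $X^2=0$ reduces to $\Gamma^2_{20}X^0=X^0/\rho=0$ and forces $X^0=0$ in every case, curved or flat. Your generic computation for $m\neq 0$ (triples $(2,3,2)$, $(2,3,3)$, then $(1,0,1)$) is correct as far as it goes, but the approach cannot be completed into a proof: parallelism is a first-order differential constraint, strictly stronger than its curvature integrability condition, and this proposition genuinely needs the differential part.
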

\begin{proof}
As proved in \cite{Ma}, if the vector field is parallel, then it is light--like. On the other hand, if $X^1$ equals zero, then
 \[g_{ij}X^iX^j=-\frac{\rho^2}{p^2}\left[ (X^2)^2+(X^3)^2\right]\leq 0.\]
Therefore, for $(X^i)$ to be light-like we must have $X^2=X^3=0$. In order to prove that $X^0$ is also zero, let us consider the equation  $\frac{\partial X^i}{\partial x^j}+\Gamma_{jk}^i X^k=0$ of the parallel vector field for $i=j=2$.  Since $X^2=0$ the equation reduces to $\Gamma_{20}^2X^0=0$ with $\Gamma_{20}^2=\frac{1}{\rho}$. Since $\frac{1}{\rho} \neq 0,\, X^{0}=0$ and the vector field $(X^i)$ is zero.

To prove that any metric of the class (\ref{Equation71}) admitting a non-trivial parallel vector field represents a trivial gravitational field, we use the fact that, if $X^l$ and $R_{ijkl}$ represent respectively the components of a parallel vector field and the components of the Riemann curvature tensor, then $R_{ijkl}X^l=0$. 

Let us consider the cases  $(i,j,k)=(0,1,2)$ and $(i,j,k)=(0,1,3)$. The corresponding equations  $R_{ijkl}X^l=0$ reduce respectively to $R_{0121}X^1=0$ and $R_{0131}X^1=0$. According to Proposition \ref{Prop1}, for $(X^l)$ not to be trivial, $R_{0121}=\frac{1}{2\rho}\frac{\partial K}{\partial \xi}=0$ and $R_{0131}=\frac{1}{2\rho}\frac{\partial K}{\partial \eta}=0$. Therefore $\frac{\partial K}{\partial \xi}=\frac{\partial K}{\partial \eta}=0$ and the components $R_{0112}, R_{0113}, R_{1223}, R_{1323}$ of the curvature tensor are zero.

The cases $(i,j,k)=(1,2,2)$ and $(i,j,k)=(1,3,3)$ lead to the equations  $R_{122l}X^l=0$ and $R_{033l}X^l=0$ which reduce  respectively to $R_{0221}X^1=0$ and $R_{0331}X^1=0$ and the condition $X^1\neq 0$ for $(X^l)$ not to be trivial implies $R_{0212}=R_{0313}=-\frac{m}{\rho p^2}=0$. Therefore $m=0$ and the components $R_{0110}$, $R_{0212}$, $R_{0313}$, $R_{2323}$ of the curvature tensor are zero.

With $(i,j,k)=(1,2,2)$ and $(i,j,k)=(1,3,3)$ lead to the equations  $R_{122l}X^l=0$ and $R_{133l}X^l=0$ which reduce  respectively to $R_{1221}X^1=0$ and $R_{1331}X^1=0$ because $R_{1220}$, $R_{1223}$, $R_{1330}$ and $R_{1332}$~are equal to zero.

The condition $X^{1}\neq 0$ for $(X^l)$ for a non-trivial $(X^l)$ implies $R_{1212}=R_{1313}=0$ and it remains to prove that $R_{1213}=0$ for $(X^l)$ not to be trivial. At this aim, we consider the equation $R_{123l}X^l=0$ which , since $R_{1230}=R_{1232}=0$ reduces to $R_{1231}X^1=0$ and leads to the constraint $R_{1213}=0$.

Actually, a solution to the equations $R_{1212}=R_{1213}=R_{1313}=0$ is obtained by imposing the condition $\frac{\partial p}{\partial \sigma}=0$ in addition to the previous constraints $m=0, \frac{\partial K}{\partial \xi}=\frac{\partial K}{\partial \eta}=0$. Then $p$ is a function of $\xi$ and $\eta$ only, $H=0$,  $K=\const$, $p$ being a solution of Eq. \eqref{Equation74} and this ends the proof of our statement.
\end{proof}

\section{Examples of gravitational fields with nontrivial parallel vector fields}
\subsection{Metric of Asher Peres }
Let us consider the metric in \cite{Pe} of the form
\begin{equation}\label{Equation61}
ds^2=dt^2-dx^2-dy^2-dz^2-2f(x+t,y,z)(dx+dt)^2.
\end{equation}
The coordinate transformation $t=t'$, $x=x'-t'$,  $y=y'$, $z=z'$
reduces \eqref{Equation61} to the form
\begin{equation}\label{Equation62}
ds^2=2dtdx-(1+2f(x,y,z))dx^2-dy^2-dz^2.
\end{equation}
With the notations $t=x^0,$ $x=x^1,$ $y=x^2,$ $z=x^3,$ it is clear that the vector field $\dfrac{\partial }{\partial x^0}$ is covariantly constant. As for the Riemann curvature tensor, up to permutation of indices, the only components which are not identically zero are
\[R_{1212}=\dfrac{\partial ^2 f}{\partial y^2},\quad R_{1313}=\dfrac{\partial ^2 f}{\partial z^2}, \quad R_{1213}=\dfrac{\partial ^2 f}{\partial y \partial z}\]
and Einstein equations in vacuo are satisfied if $f$ is a harmonic function of $y$ and $z$ whatever its dependence on $(x+t)$.

\subsection{Generalization of the weak plane gravitational wave solution}
Let us consider the metric tensor $g_{ij}(x^0)$  where the determinant $|g_{\alpha \beta}|$ of the submatrix $g_{\alpha \beta}$ for $ \alpha, \beta$ running through $1,2,3$ is equal to zero. It can be shown~\cite{La} that, by a coordinate transformation, the metric can be brought to the form
\begin{equation}\label{Equation63}
ds^2=2d\eta dx^1+g_{ab}(\eta )dx^adx^b
\end{equation}
with $a$ and $b$ running through $2,3$. It is clear that the vector field $\dfrac{\partial}{\partial x^1}$ is covariantly constant with respect to the metrics of the form \eqref{Equation63} which constitute a special case of the metrics \eqref{Equation46}. The solutions of Einstein's equations in vacuo represent a generalization of the weak plane gravitational wave solution~\cite{La}. The families of metrics \eqref{Equation62} and \eqref{Equation63} have been investigated in details with regard to the gravitational wave memory effect in~\cite{Zh}.

\section*{Conclusion }
Within the framework of classical general relativity we proved that, contrary to the case of time-like and space--like parallel vector fields, there are nontrivial solutions of Einstein equations admitting nontrivial light-like parallel vector fields. However, the main aim of the paper was not proving the existence of such parallel vector fields, but using this geometric constraint to solve the field equations. The corresponding solutions turned out to be generically wave--like and, using geometric reasoning, we proved that the class of the solutions we obtained is quite different from the class of spherical waves. In addition, using the covariant formulation of the strong equivalence principle presented in~\cite{Ge} we deduced an interesting property of the Riemann curvature tensor. Of course as we have seen this formulation is trivial in our case since, within the context of classical general relativity, the existence of a nontrivial parallel vector field implies the annihilation of the Ricci tensor which turns the system $D_i R^i_{jkl}=0$ into an identity. 
However, the existence of a parallel vector field does not always imply the annihilation of the Ricci tensor in any covariant gravity theory. Therefore, the combination of the existence of a nontrivial parallel vector field and a nontrivial annihilation of the covariant divergence of the curvature tensor as constraints on the metric tensor of space time can be used efficiently in the search of solutions of the field equations in alternative theories of gravity such as in monograph~\cite{Che}.

\end{document}